\begin{document}

\title{Fixed-price Diffusion Mechanism Design}


%

%
%

\author{Tianyi Zhang \and
Dengji Zhao \and
Wen Zhang \and
Xuming He}
\authorrunning{T. Zhang et al.}
%
\institute{School of Information Science and Technology, \\ShanghaiTech University, Shanghai, China\\
\email{\{zhangty,zhaodj,zhangwen,hexm\}@shanghaitech.edu.cn}}

%
\maketitle              
%

\begin{abstract}
We consider a fixed-price mechanism design setting where a seller sells one item via a social network, but the seller can only directly communicate with her neighbours initially. Each other node in the network is a potential buyer with a valuation derived from a common distribution. With a standard fixed-price mechanism, the seller can only sell the item among her neighbours. To improve her revenue, she needs more buyers to join in the sale. To achieve this, we propose the very first fixed-price mechanism to incentivize the seller's neighbours to inform their neighbours about the sale and to eventually inform all buyers in the network to improve seller's revenue. Compared with the existing mechanisms for the same purpose, our mechanism does not require the buyers to reveal their valuations and it is computationally easy. More importantly, it guarantees that the improved revenue is at least 1/2 of the optimal.

\keywords{Mechanism Design \and Fixed-price Mechanism \and Information Diffusion \and Social Network.}

\end{abstract}
\section{Introduction}
Social networks play a vital role in our daily activities, especially in electronic commerce. How to utilize social networks for promotions is widely studied by both researchers and practitioners~\cite{Jackson2005TheEO}. Mechanism design on social networks, aiming to establish specific policies satisfying some desirable properties, has received much attention from the fields of economics, computer science and artificial intelligence~\cite{Domingos:2001:MNV:502512.502525}.

The fixed-price mechanism has been the mostly applied trading rules for a seller to sell products in our daily life. The seller sets a fixed price before the buyers buy the product and a buyer would buy it if the price is not over her valuation for the product. Due to its simplicity, there does not exist much research on it~\cite{mathews2004impact}.


However, determining an optimal price under a fixed-price mechanism is still challenging. If the seller aims to maximize her revenue, she could find the optimal price given the number of buyers in the market and their valuation distributions. We will show that the number of buyers can significantly influence the seller's revenue. Specifically, even under fixed-price mechanisms, the more buyers participating in the mechanism, the higher revenue for the seller. Therefore, the seller is incentivized to promote the sale to attract more buyers to increase her revenue, even if there is only one item to sell.     


This paper aims to design a novel fixed-price mechanism to help the seller to attract more buyers via a social network. We assume that the seller is located on the social network and she can sell the item with a fixed-price mechanism to her neighbours without promotion. We want the seller's neighbours to help her attract more buyers from their neighbours. However, her neighbours would not do so without giving them a proper incentive (why should they bring competitors for the item). The goal of our mechanism is to design such an incentive for the neighbours to invite more buyers.


A similar problem has been studied recently in \cite{article,zhao2018selling} where they proposed an auction with a dedicated payment scheme to attract more buyers via a social network. The payment requires the whole structure of the network and also the valuations of all buyers, while our proposed mechanism is fixed-priced and computationally easy with a guaranteed revenue improvement for the seller. 

Our mechanism is also inspired by the mechanism with rewards used by the MIT Red Balloon Challenge Team to win the 2009 DARPA Networks Competition sponsored by a research organization of the United States Department of Defense~\cite{pickard2011time}. The challenge is to locate 10 balloons located at undisclosed locations across the U.S.. With the dedicated rewarding scheme, the MIT team only began with four initial participants but eventually attracted over 5,000 participates and won the challenge. We extended this rewarding scheme into our mechanism to incentivize buyers to invite their neighbours.

As closely related work, fixed-price mechanisms have also been investigated in other mechanism design settings \cite{wang2008auction,tran2017group}. Goldberg \textit{et al.} used an optimal fixed pricing as a benchmark for evaluating the expected revenue of single-round and sealed-bid auctions~\cite{goldberg2006competitive}. Emek \textit{et al.} analyzed the properties of the well-known geometric reward mechanism that incentivizes the participation for multi-level marketing~\cite{emek2011mechanisms}. McAfee showed under the condition that the buyer's distribution dominates the seller's distribution for a bilateral trade, a fixed-price mechanism gets at least half of the maximum social welfare gain from the trade~\cite{mcafee2008gains}. Badanidiyuru \textit{et al.} presented a constant-competitive fixed-price mechanism for online procurement markets with sub-modular utility~\cite{badanidiyuru2012learning}.

However, to the best of our knowledge, no work has considered introducing information diffusion into fixed-price mechanisms. Thus, in this work, we propose a novel mechanism by integrating the merits of both fixed-price mechanism and information diffusion. Our fixed-price diffusion mechanism distinguishes itself with the following.

\begin{enumerate}
	\item It helps the seller to attract more buyers via social networks without paying them in advance. 
	\item It does not require buyers to reveal their valuations. The payments for buyers are relatively fixed and computationally easy. 
	\item The seller's revenue improvement is guaranteed and it is bounded above half of the optimal. 
\end{enumerate}

The rest of the paper is structured as follows. Section~\ref{The model} describes the model of our study. Section~\ref{without} investigates the traditional fixed-price mechanism. Section~\ref{FPDM} defines the proposed Fixed-price Diffusion Mechanism. Section~\ref{pro} studies the properties of the proposed mechanism and we conclude in Section~\ref{con}.

\section{The Model}
\label{The model}
We consider a market where a seller $s$ sells one item via a social network. We assume that the social network is a rooted tree, and the root is $s$ and there are $k$ other nodes in the tree denoted by $K=\{1,\cdots,k\}$. The $k$ other nodes are the potential buyers of the item sold by $s$. Each agent $i \in K $ has a set of neighbours denoted by $n_i \subseteq K \backslash \{i\}$, and she can only exchange information with her neighbours. Each buyer $i \in K$ has a valuation $v_i > 0$ for the item. We assume, as suggested by Myerson, buyers' valuations are independently derived from the uniform distribution on the interval $[0,1]$~\cite{myerson1981optimal}.

Without any advertising, the seller $s$ can only sell the item among her neighbours since she can not communicate with the rest buyers from the network directly. Fixed-price mechanisms have been widely applied for this kind of market due to their simplicity. However, determining an optimal fixed-price is challenging. Moreover, the more buyers in the market, the higher chance to sell it with a higher fixed-price. In order to attract more people to buy, the seller $s$ may pay third-party platforms to promote the sale. This is often very costly with no guarantee to improve the revenue. Therefore, in this paper, we consider another class of fixed-price mechanisms which can improve the seller's revenue by attracting more people to buy. It achieves this by incentivizing buyers to propagate the sale information to all their neighbours, instead of by the seller's advertising through other platforms. At the same time, it does not require buyers to report their valuations as standard fixed-price mechanisms would do. 

Since our mechanism does not require buyers to report their valuations, the action of a buyer is just diffusing the sale information to their neighbours. Let $n_i$ be the {\em type} of buyer $i \in K$, $(n_1,\cdots,n_k)$ be the type profile of all buyers and $(n_1,\cdots,n_{i-1},n_{i+1},\cdots,n_k)$ be the type profile of all buyers except $i$. Let $a_i'=n_i'$ or $nil$ be the {\em action} of $i$, where $n_i'$ represents the neighbours informed by $i$ and $nil$ represents that $i$ has not been informed or she does not want to participate in the sale. Furthermore, let $a'=(a_1',\cdots,a_k')$ be the action profile of all buyers and $a_{-i}'$ be the action profile of all buyers except $i$.
\begin{definition}
	Given the type profile $(n_1,\cdots,n_k)$ of all buyers, an action profile $a'$ is feasible if for all $i \in K$,
	\begin{itemize}
		\item $a_i'\ne nil$ if and only if $i$ can be informed the sale information from the seller following the action profile of $a_{-i}'$. 
		\item if $a_i'\ne nil$, then $n_i' \subseteq n_i$.
	\end{itemize}
\end{definition}
Let $\mathcal{F}((n_1,\cdots,n_k))$ be the set of all feasible action profiles of all buyers. The mechanism is defined only on feasible action profiles.

\begin{definition}
	A mechanism in the social network consists of an allocation policy $\pi=(\pi_i)_{i \in K}$ and a payment policy $p=(p_i)_{i \in K}$, where $\pi_i:\mathcal{F}((n_1,\cdots,n_k)) \rightarrow \left\{0,1\right\}$ and $p_i:\mathcal{F}((n_1,\cdots,n_k)) \rightarrow [0,1]$ are the allocation and payment functions for buyer $i$.
\end{definition}

In the above definition, given the action profile $a'$, $\pi_i(a')=1$ indicates that $i$ receives the item whereas $\pi_i(a')=0$ means that $i$ does not get the item. $p_i(a') > 0$ indicates $i$ pays $p_i(a')$ to the mechanism whereas $p_i(a') < 0$ means $i$ receives $|p_i(a')|$ from the mechanism.

Next, we introduce the desirable properties we want to design for the mechanism.
\begin{definition}
	An allocation $\pi$ is feasible if for all $a' \in \mathcal{F}((n_1,\cdots,n_k))$,
	\begin{itemize}
		\item for all $i \in K$, if $a_i'=nil$, then $\pi_i(a')=0$.
		\item $\sum_{i \in K}\pi_i(a') \leq 1$.
	\end{itemize}
\end{definition}
A feasible allocation cannot allocate the item to a buyer who does not join the sale and it should not allocate the item to multiple buyers. In the rest of the paper, only feasible allocations are considered.

Given a feasible action profile $a'$ and a mechanism $(\pi,p)$, the utility of $i$ is defined as
\begin{equation*}
u_i(a',(\pi,p))=\pi_i(a')v_i-p_i(a').
\end{equation*} 

A mechanism is individually rational if, for each buyer, her utility is always non-negative no matter how many neighbours she informs.

\begin{definition}
	A mechanism $(\pi,p)$ is individually rational (IR) if $u_i(a',(\pi,p))\geq 0$ for all $i \in K$, for all $a' \in \mathcal{F}((n_1,\cdots,n_k))$.
\end{definition}

In a standard fixed-price mechanism, a mechanism is incentive compatible if and only if for each buyer, reporting her truthful action is a dominant strategy~\cite{Nisan2009Algorithmic}. In our model, the action of each buyer is diffusing the sale information. Hence, incentive compatibility means that for each buyer, diffusing the sale information to all her neighbours is a dominant strategy.

\begin{definition}
	A mechanism $(\pi,p)$ is incentive compatible (IC) if $u_i(a',(\pi,p)) \geq u_i(a'',(\pi,p))$ for all $i \in K$, for all $a', a'' \in \mathcal{F}((n_1,\cdots,n_k))$ such that $a_i'=n_i$.
\end{definition}

In this paper, we design a fixed-price mechanism that is IC and IR, which encourages the buyers to diffuse the sale information without being rewarded in advance. More importantly, it gives the seller a nearly optimal revenue.

\section{Fixed-price Mechanism without Diffusion}
\label{without}
As we mentioned in Section 1, fixed-price mechanisms are widely used due to their simplicity. However, like many other traditional mechanisms such as the well-known Vickrey-Clarke-Groves (VCG) mechanism, the seller only sells the item among the people she can communicate with~\cite{vickrey1961counterspeculation}.

Given the model defined in Section 2, we can give a brief description of the fixed-price mechanism for selling one item as follows.
\begin{framed}
	\noindent\textbf{Fixed-price Mechanism without Diffusion}\\ 
	\rule{\textwidth}{0.5pt}
	\begin{enumerate}
		\item The seller sets a fixed price $p \in [0,1]$ and informs it to all buyers. 
		\item The buyers whose valuations are greater than the fixed price $p$ claim to buy the item. 
		\item The seller selects one of them who claimed to buy (with random tie-breaking), gives the item to her, and receives the payment $p$ from the buyer.
		\item If no one claims to buy the item, the seller keeps it.
	\end{enumerate}
\end{framed} 

In order to get higher revenue, the seller can optimize the price of $p$ in the above mechanism. Theorem 1 shows that if the seller knows how many buyers participating in the sale, there is an optimal fixed price for her to maximize her revenue.
\begin{theorem}
	Under the fixed-price mechanism without diffusion, given the number of buyers $x$ in the market, the optimal fixed-price to maximize the seller's expected revenue is $p^{opt}=(\frac{1}{1+x})^\frac{1}{x}$, which is an increasing function of $x$.
\end{theorem}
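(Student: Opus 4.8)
The plan is to express the seller's expected revenue as an explicit function of the fixed price $p$, maximize it by elementary calculus, and then treat the resulting optimizer as a function of $x$ and show it is increasing. First I would compute the revenue. Under the mechanism the seller collects $p$ precisely when at least one of the $x$ buyers claims to buy, i.e.\ when at least one valuation exceeds $p$; otherwise the revenue is $0$. Since the valuations are i.i.d.\ uniform on $[0,1]$, the probability that a given buyer's valuation is at most $p$ is $p$, so the probability that all $x$ valuations are at most $p$ is $p^x$. Hence the probability of a sale is $1-p^x$ and the expected revenue is
\[
R(p)=p\,(1-p^x)=p-p^{x+1}.
\]

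Next I would optimize $R$ over $p\in[0,1]$. Differentiating gives $R'(p)=1-(x+1)p^x$, and setting $R'(p)=0$ yields $p^x=\tfrac{1}{1+x}$, i.e.\ $p^{opt}=\left(\tfrac{1}{1+x}\right)^{1/x}$. To confirm this is genuinely the maximizer, I would check that $R''(p)=-(x+1)x\,p^{x-1}<0$ for $p>0$, so $R$ is strictly concave on $(0,1]$ and the interior critical point is its unique maximizer; one also notes $p^{opt}\in(0,1)$, so it is an admissible price and neither boundary $p=0$ nor $p=1$ (both giving revenue $0$) can do better.

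Finally, to show $p^{opt}$ is increasing in $x$, I would pass to logarithms and treat $x$ as a continuous variable. Writing $\ln p^{opt}=-\tfrac{\ln(1+x)}{x}$, it suffices (since $\exp$ is increasing) to show this expression increases in $x$. Differentiating, the sign of the derivative is governed by the quantity $\ln(1+x)-\tfrac{x}{1+x}$. I would verify this auxiliary quantity is positive for all $x>0$ by noting it vanishes at $x=0$ and has derivative $\tfrac{x}{(1+x)^2}>0$, so it is strictly increasing and hence strictly positive on $(0,\infty)$; this yields the claimed monotonicity. The computation is almost entirely routine; the only step needing a short separate argument is this last inequality, which I expect to be the main (though mild) obstacle.
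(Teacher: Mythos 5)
Your proposal is correct and follows essentially the same route as the paper: compute the expected revenue $p(1-p^x)$, find the critical point by setting the derivative to zero, and establish monotonicity in $x$ via the sign of $\frac{d}{dx}\bigl(-\tfrac{\ln(1+x)}{x}\bigr)$, whose numerator (your $\ln(1+x)-\tfrac{x}{1+x}$, the paper's $(1+x)\ln(1+x)-x$) is the same quantity up to the positive factor $1+x$. You are in fact slightly more complete than the paper, since you verify concavity to confirm the critical point is a maximum and prove the auxiliary inequality that the paper merely asserts is ``easy to prove.''
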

\begin{proof}
	Since the valuations of all buyers on the item have the independent and identical uniform distribution $U[0,1]$, the probability that a buyer's valuation $v_i<p$ equals $P(v_i<p)=p$, and the probabilty that all of the buyers' valuations are smaller than $p$ equals $p^x$, which is also the probability that the item is not sold. So, the probability that the item can be sold is $1-p^x$ with the seller's revenue being $p$. Thus we get the seller's expected revenue
	\begin{equation*}
	E=(1-p^x) \cdot p
	\end{equation*}
	and its derivative with respect to $p$ is
	\begin{equation*}
	\frac{dE}{dp}=1-(x+1) \cdot p^x.
	\end{equation*}
	
	By letting $\frac{dE}{dp}=0$, we get the optimal fixed-price, 
	\begin{equation}
	p^{opt}=(\frac{1}{1+x})^{\frac{1}{x}}.
	\end{equation} 
	
	Hence, the seller's maximum expected revenue is
	\begin{equation}
	E_{base}=(1-(p^{opt})^x)\cdot p^{opt}=(\frac{x}{1+x})\cdot (\frac{1}{1+x})^{\frac{1}{x}}.
	\end{equation}
	
	Let $y=(\frac{1}{1+x})^{\frac{1}{x}}$, then $\ln y=-\frac{1}{x}\cdot \ln(1+x)$. Therefore $\frac{d(\ln y)}{dx}=\frac{(1+x)\cdot\ln(1+x)-x}{x^2\cdot(1+x)}$. By $(1+x)\cdot\ln(1+x)-x > 0 $, which is easy to prove, we have $\frac{d(\ln y)}{dx} > 0$. Having a positive derivative, $\ln y$ is an increasing function of $x$, and so is $y$. 
	\qed
\end{proof} 

Furthermore, since $\frac{x}{1+x}$ is an increasing function of $x$, we conclude that $E_{base}$ is an increasing function of $x$. That is more buyers participating in the fixed-price mechanism, more expected revenue for the seller.    

Now, we consider an extreme case of the network, where all buyers are the seller's neighbours. In this case, the maximum of the seller's expected revenue under the fixed-price mechanism is $E_{opt}=(\frac{1}{1+k})^\frac{1}{k}\cdot(\frac{k}{1+k})$, where $k$ is the total number of buyer in the network.

If $x$ is the number of neighbours of the seller $s$, then $E_{base}$ and $E_{opt}$ define the lower bound and the upper bound of the seller's expected revenue under our diffusion mechanism that we propose next.

\section{Fixed-price Diffusion Mechanism}
\label{FPDM}
In the fixed-price mechanism discussed in Section 3, the seller sells the item only among her neighbours $n_s$. However, the seller will obtain higher expected revenue if the number of participants increases. Therefore, in this section, we design a mechanism, called fixed-price diffusion mechanism, to incentivize the seller's neighbours to diffuse the sale information to their neighbours for the seller to increase the number of participants. 

Our mechanism will iron the conflict between the seller aiming to improve her revenue and the buyers unwilling to diffuse the information to bring more rivals. 

Before introducing our mechanism, we need some extra notations. Given the tree-like social network, let $d_j$ be the depth of buyer $j \in K$ and $PH_j$ be the set of all buyers except $j$ on the path from the seller to $j$. We assume that the seller has $x$ neighbours and let $n_s=\{1,\cdots,x\}$. In addition, we call a subtree rooted at each $i \in n_s$ a branch and denote the set of the buyers in branch $i$ with $X_i$, the cardinality of $X_i$ with $k_i$, and the number of the rest buyers in the network except branch $i$ with $k_{-i}(i=1,\cdots,x)$. Without loss of generality, we assume $k_1 \geq k_2 \geq \cdots \geq k_x$. As a final notation, $\alpha \in [0,1]$ is a factor, which is used to adapt the amount of the rewards in the proposed mechanism. 

Our mechanism is defined as follows. 
\begin{framed}
	\noindent\textbf{Fixed-price Diffusion Mechanism (FPDM)}\\ 
	\rule{\textwidth}{0.5pt}
	
	Given the buyers' action profile $a' \in \mathcal{F}((n_1,\cdots,n_k))$ and $\alpha \in [0,1]$, compute the number of buyers in each branch, for each branch $i$ $(i=1,\cdots,x)$, by substituting $x$ in Equation (1) with $k_{-i}$, fixed price $p_i^{opt}(a')=(\frac{1}{1+k_{-i}})^{\frac{1}{k_{-i}}}$ is defined for all the buyers in branch $i$ if they receive the item. 
	\begin{itemize}
		\item \textbf{Allocation:} 
		\begin{enumerate}
			\item Starting from $i=1$, while $i \leq x$, repeat the following:
			
			The seller informs all buyers in branch $i$ of the price $p_i^{opt}(a')$.
			\begin{enumerate}
				\item Those buyers in branch $i$ whose valuations are greater than or equal to $p_i^{opt}(a')$ claim to buy the item.
				\item If there are multiple claimers, the seller picks the one with the smallest depth. If there are multiple such claimers, select the one with the largest number of neighbours to allocate the item (with random tie-breaking). Allocation ends.
				\item Otherwise, set $i=i+1$. 
			\end{enumerate}	
			\item If the item is not allocated in Step 1, the seller keeps it.
		\end{enumerate}
		\item \textbf{Payment:} If the item is allocated to a buyer $j$ in branch $w$, then for each buyer $l \in K$, her payment is
		\begin{equation*}p_l (a')=		\begin{cases}
		p_w^{opt}(a')& \text{if $\pi_l(a')=1$}\\ 
		(p_{base}-p_w^{opt}(a'))\alpha (\frac{1}{2})^{d_l}& \text{if $\pi_l(a')=0$}\\
		& \text{and $l \in PH_{j}$}\\
		0&  \text{otherwise}
		\end{cases}
		\end{equation*}
		where $p_{base}=(\frac{1}{1+x})^\frac{1}{x}$ is the optimal fixed price to sell among $n_s$.
	\end{itemize}
\end{framed}

Obviously, $p_i^{opt}(a')$ is irrelevant to all the buyers in $X_i$. That is to say, $p_i^{opt}(a')$ is independent of buyers' action, which will be used in the proof of the IC of our mechanism.

Intuitively, if no buyers want to buy the item, then the seller keeps the item; Otherwise, the seller first selects the branch with the largest number of buyers to sell the item. Later in this branch, among those whose valuations are higher than the fixed-price, the seller allocates the item to the buyer who is closest to the seller and of the most neighbours (with random tie-breaking). The recipient of the item pays her payment to the seller. 

Unlike in the fixed-price mechanism defined in Section 3, here the payment varies with different branches.  

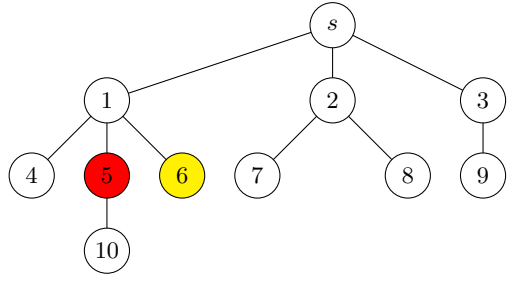
\begin{figure}
	\centering
	\begin{tikzpicture}
	\draw (1,1)--(2,2)--(5,3)--(7,2)--(7,1);
	\draw (2,2)--(2,1)--(2,0);
	\draw (2,2)--(3,1);
	\draw (5,3)--(5,2)--(4,1);
	\draw (5,2)--(6,1);
	\filldraw [fill=white] (1,1) circle (0.3);
	\filldraw [fill=white] (2,2) circle (0.3);
	\filldraw [fill=white] (2,0) circle (0.3);
	\filldraw [fill=red] (2,1) circle (0.3);
	\filldraw [fill=yellow] (3,1) circle (0.3);
	\filldraw [fill=white] (5,3) circle (0.3);
	\filldraw [fill=white] (5,2) circle (0.3);
	\filldraw [fill=white] (6,1) circle (0.3);
	\filldraw [fill=white] (7,2) circle (0.3);
	\filldraw [fill=white] (7,1) circle (0.3);
	\filldraw [fill=white] (4,1) circle (0.3);
	\node (P) at (5,3) {$s$};
	\node (P) at (2,2) {$1$};
	\node (P) at (5,2) {$2$};
	\node (P) at (7,2) {$3$};
	\node (P) at (1,1) {$4$};
	\node (P) at (2,1) {$5$};
	\node (P) at (3,1) {$6$};
	\node (P) at (4,1) {$7$};
	\node (P) at (6,1) {$8$};
	\node (P) at (7,1) {$9$};
	\node (P) at (2,0) {$10$};
	\end{tikzpicture}
	\label{proposed me}
	\caption{Illustrating the proposed mechanism. In the tree-structure social network, the numbers represent the buyers, and the seller $s$ is selling an item. The buyer 5 finally wins the item after competing with another buyer 6.}
\end{figure}

Figure~\ref{proposed me} illustrates the proposed mechanism with an example. The seller $s$ sells an item. There has three branches denoted by branch 1, 2 and 3. $X_1=\{1,4,5,6,10\}, X_2=\{2,7,8\}, X_3=\{3,9\}$. Then $p_1^{opt}(a')=0.699, p_2^{opt}(a')=0.743, p_3^{opt}(a')=0.760$. The valuations of all buyers from number 1 to number 10 are assumably given to be $v_1=0.6, v_2=0.7, v_3=0.7, v_4=0.5, v_5=0.8, v_6=0.9, v_7=0.3, v_8=0.4, v_9=0.1, v_{10}=0.5$ respectively. The seller first selects branch 1 since it has the largest number of buyers. Then in branch 1, according to $v_5 > p_1^{opt}(a'), v_6 >p_1^{opt}(a')$, we know only buyer 5 and 6 can buy the item. Since $d_5=d_6=2$, and since buyer 5 has one neighbour and buyer 6 has none, the seller allocates the item to buyer 5. We can work out that $PH_{5}=\{1\}$. At the same time, we get the seller's revenue generated by our mechanism is $p_1^{opt}(a')=0.699$ and $d_1=1$. Following the fixed-price mechanism defined in Section 3, we know that buyer 2 receives the item. Therefore $p_{base}=(\frac{1}{4})^\frac{1}{3}=0.630$. We let $\alpha=0.1$. The payments of buyer 1 and 5 are $-0.00345$ and $0.699$ respectively. So the utility of buyer 1 equals $ (p_1^{opt}(a')-p_{base})\alpha (\frac{1}{2})^{d_1}=0.00345$, the utility of buyer 5 equals $v_5-p_1^{opt}(a')=0.8-0.699=0.101$, and the other utilities equal to zero.

\section{Properties of FPDM}
\label{pro}
In this section, we first show that the seller's expected revenue generated by our mechanism is greater than that of the fixed-price mechanism without diffusion and approaches the expected revenue of the extreme optimal case defined in Section 3. Second, we prove that our fixed-price diffusion mechanism is individually rational. At last, we prove that our mechanism is incentive compatible. Therefore, the seller is willing to apply our mechanism.    
\begin{theorem}	
	The seller's expected revenue generated by our mechanism is greater than that of the fixed-price mechanism without diffusion; when the number of buyers in the social network is infinity and $\alpha \rightarrow 0$, the former is at least $50\%$ of the extreme optimal case defined in Section 3. This bound is independent of the structure of networks and is only related to the number of buyers in the network.
\end{theorem}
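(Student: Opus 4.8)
The plan is to reduce the seller's expected revenue under FPDM, which I will denote $E_{FPDM}$, to a single closed form and then bound it below, treating the two assertions separately. Two elementary observations drive everything. First, since the branches are ordered with $k_1 \geq \cdots \geq k_x$, the complementary counts obey $k_{-1} \leq \cdots \leq k_{-x}$, so by the monotonicity of the optimal price proved in Theorem 1 the branch prices are ordered $p_1^{opt}(a') \leq \cdots \leq p_x^{opt}(a')$; thus $p_1^{opt}(a')$ is the smallest price the mechanism ever charges. Second, because valuations are i.i.d. uniform, conditioned on the allocation procedure reaching branch $i$ the item fails to sell there exactly when all $k_i$ local valuations fall below $p_i^{opt}(a')$, an event of probability $Q_i := (p_i^{opt}(a'))^{k_i}$. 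Hence the item is sold in branch $i$ with probability $(\prod_{m<i}Q_m)(1-Q_i)$, and these telescope to an overall selling probability $1 - \prod_{i=1}^{x} Q_i$.

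Next I would pin down the net revenue per sale. If the item goes to a winner $j$ in branch $w$, the seller collects $p_w^{opt}(a')$ and rebates $(p_w^{opt}(a')-p_{base})\alpha\sum_{l\in PH_j}(1/2)^{d_l}$ along the path; since $\sum_{l\in PH_j}(1/2)^{d_l}=1-(1/2)^{d_j-1}<1$ and $\alpha\leq1$, the net revenue is a convex combination of $p_w^{opt}(a')$ and $p_{base}$ and therefore lies between them, tending to $p_w^{opt}(a')$ as $\alpha\to0$. Combining with the selling probabilities yields, in the limit $\alpha\to0$, $E_{FPDM}=\sum_i(\prod_{m<i}Q_m)(1-Q_i)p_i^{opt}(a')$, and replacing each price by the minimum $p_1^{opt}(a')$ gives the clean lower bound $E_{FPDM}\geq p_1^{opt}(a')\,(1-\prod_i Q_i)$.

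For the first assertion I would compare against $E_{base}=p_{base}(1-\frac{1}{1+x})$. In the typical case where every branch has $k_{-i}\geq x$, so that $p_i^{opt}(a')\geq p_{base}$, the per-sale revenue is at least $p_{base}$ and it suffices to show FPDM sells at least as often as the baseline, i.e. $\prod_i Q_i<p_{base}^x=\frac{1}{1+x}$; taking logarithms this is the inequality $S:=\sum_i\frac{k_i}{k_{-i}}\ln(1+k_{-i})>\ln(1+x)$, which I would attack using $k_{-i}\geq x-1$ and convexity/Cauchy--Schwarz bounds on $\sum_i\frac{k_i}{k_{-i}}$, checking the small-$x$ instances directly since the first-order estimate is tight there. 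For the second assertion I would first note $E_{opt}=p^{opt}(k)\frac{k}{1+k}\to1$ as $k\to\infty$, and that $p_1^{opt}(a')=p^{opt}(k_{-1})\geq p^{opt}(1)=\frac12$ because $x\geq2$ forces $k_{-1}\geq1$; the lower bound then reads $E_{FPDM}\geq\frac12(1-\prod_i Q_i)$, so it remains to show $\prod_i Q_i\to0$, equivalently $S\to\infty$. I would split on the largest branch: if $k_1\geq k/2$ the single term $\frac{k_1}{k_{-1}}\ln(1+k_{-1})\geq\ln(1+k/2)\to\infty$, whereas if $k_1<k/2$ then every $k_{-i}>k/2$ and $S>\ln(1+k/2)\sum_i\frac{k_i}{k}\to\infty$. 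Hence $\liminf_{k\to\infty}E_{FPDM}\geq\frac12=\frac12\lim E_{opt}$, and the instance $k_1=k-1,\ k_2=1$, where $p_1^{opt}(a')\to\frac12$ while the item sells with probability tending to $1$, shows the constant $\frac12$ cannot be improved.

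The crux, in both assertions, is controlling the product $\prod_i Q_i$ of non-sale probabilities uniformly over all tree shapes. The asymptotic claim $S\to\infty$ cannot be obtained from branch $1$ alone: in the many-singletons regime branch $1$ sells with vanishing probability yet the aggregate does not, which is exactly why the case split on $k_1$ is needed. The finite inequality $S>\ln(1+x)$ is delicate for small $x$, where the leading-order bound barely fails and a direct check is required. Finally, I expect the giant-branch edge case $k_{-1}=x-1$, in which $p_1^{opt}(a')<p_{base}$ and the per-sale revenue dips below $p_{base}$, to need a separate treatment: there the comparison with $E_{base}$ must be salvaged through the boosted selling probability rather than the price.
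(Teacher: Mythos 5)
Your proposal takes a genuinely different route from the paper, and for the asymptotic ($50\%$) claim it is strictly stronger. The paper's own proof verifies the first assertion only on the single instance $x=2$, $k_1=3$, $k_2=2$ (declared ``without loss of generality''), and proves the second assertion only for networks in which a single neighbour has all the descendants, reading the $50\%$ bound off a plotted curve. Your argument for the limit claim is a complete, structure-independent proof: the price ordering $p_1^{opt}(a')\le\cdots\le p_x^{opt}(a')$, the bound $p_1^{opt}(a')\ge\frac12$ (valid since $x\ge2$ forces $k_{-1}\ge1$), the inequality $E_{FPDM}\ge\frac12(1-\prod_i Q_i)$, and the dichotomy on $k_1\gtrless k/2$ showing $\prod_i Q_i\to0$ are all correct (the first case uses that $\ln(1+m)/m$ is decreasing, which is exactly the monotonicity behind Theorem 1), and your instance $k_1=k-1$, $k_2=1$ shows tightness of the constant. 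This actually establishes the structure-independence assertion, which the paper's figure-based argument cannot. Your handling of the rebates as a convex combination of $p_w^{opt}(a')$ and $p_{base}$ is also more careful than the paper, whose revenue formula silently drops the rebate term (i.e.\ sets $\alpha=0$).

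The genuine gap is in the first assertion, and it is not merely the deferred inequality $S>\ln(1+x)$ (that one your plan can close: $\ln(1+k_{-i})\ge\ln x$ together with convexity of $t\mapsto t/(k-t)$ gives $S\ge\frac{x}{x-1}\ln x>\ln(1+x)$ for $x\ge2$). The problem is the edge case you defer, which cannot be repaired in general because the claim is false there. Write $q(m)=(\frac{1}{1+m})^{\frac{1}{m}}$. If every branch is a singleton ($k_i=1$ for all $i$, so $k=x$ and $k_{-i}=x-1$), FPDM charges every buyer $q(x-1)$, hence $E_{FPDM}=q(x-1)\bigl(1-q(x-1)^x\bigr)$, while $E_{base}=q(x)\bigl(1-q(x)^x\bigr)$; since $p\mapsto p(1-p^x)$ has its unique maximum at $p=q(x)$, we get $E_{FPDM}<E_{base}$ strictly. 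Concretely, for $x=k=2$: $E_{FPDM}=\frac12(1-\frac14)=0.375$, whereas $E_{base}=\frac23(\frac13)^{\frac12}\approx0.385$. Here the ``boosted selling probability'' you invoke does not suffice (the sale probability rises from $\frac23$ to $\frac34$, but the price falls from $0.577$ to $0.5$). So any correct proof of the first assertion must add a hypothesis excluding such degenerate instances (i.e.\ requiring genuine diffusion beyond the seller's neighbourhood); neither your proposal nor the paper states one, and in fact this is a flaw in the theorem itself that the paper's single-instance check never encounters.
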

\begin{proof}
	We first prove the first part of the theorem. According to the payment policy of our mechanism, given the buyers' action profile $a' \in \mathcal{F}((n_1,\cdots,n_k))$, the seller's expected revenue generated by our mechanism equals
	\begin{equation*}
	\begin{split}
	E_{FPDM}=&p_1^{opt}(a')(1-p_1^{opt}(a')^{k_1})+\\
	&p_2^{opt}(a')p_1^{opt}(a')^{k_1}(1-p_2^{opt}(a')^{k_2})+\cdots+\\
	&p_x^{opt}(a')p_1^{opt}(a')^{k_1}\cdots p_{x-1}^{opt}(a')^{k_{x-1}}(1-p_x^{opt}(a')^{k_x}).
	\end{split}
	\end{equation*}
	
	By Equation (2), the seller's expected revenue generated by the fixed-price mechanism without diffusion equals
	\begin{equation*}
	E_{base}=(\frac{x}{1+x})\cdot (\frac{1}{1+x})^{\frac{1}{x}}.
	\end{equation*}
	
	It is a tedious manipulation to compare $E_{FPDM}$ and $E_{base}$. Without loss of generality, we assume $x=2, k_1=3, k_2=2$. Then 
	\begin{flalign*}
	&\text{$E_{FPDM}=(\frac{1}{3})^\frac{1}{2}\cdot (1-(\frac{1}{3})^\frac{3}{2}) +(\frac{1}{4})^\frac{1}{3} \cdot (\frac{1}{3})^\frac{3}{2} \cdot (1-(\frac{1}{4})^\frac{2}{3}))$,}&
	\end{flalign*}
	\begin{flalign*}
	&\text{$E_{base}=\frac{2}{3}\cdot(\frac{1}{3})^\frac{1}{2}$.}&
	\end{flalign*}
	
	It is obvious that $(\frac{1}{3})^\frac{1}{2}\cdot (1-(\frac{1}{3})^\frac{3}{2}) > \frac{2}{3}\cdot(\frac{1}{3})^\frac{1}{2}$, so we have $E_{FPDM} > E_{base}$. 
	
	To prove the second part of the theorem, for simplicity, we further assume that only one of the seller's neighbours has descendants.Then, the seller's expected revenue under our mechanism is	
	\begin{equation*}
	\begin{split}
	E_{FPDM}=&[1-(\frac{1}{x})^\frac{k-x+1}{x-1}](\frac{1}{x})^\frac{1}{x-1}+\\
	&(\frac{1}{x})^\frac{k-x+1}{x-1}[1-(\frac{1}{k})^\frac{x-1}{k-1}](\frac{1}{k})^\frac{1}{k-1}.
	\end{split}
	\end{equation*}
	
	Again, following the Section 3, we know that  $E_{opt}=(\frac{1}{1+k})^\frac{1}{k}\cdot (\frac{k}{1+k})$.
	
	To compare $E_{FPDM}$ with $E_{opt}$, the curve of $\frac{E_{FPDM}}{E_{opt}}$ is depicted in Figure~\ref{fig:222222}, from which we see that $E_{FPDM}$ achieves at least $50\%$ of $E_{opt}$.
	\qed
\end{proof}

For a clearer understanding of the theorem, we consider two special cases. 

\begin{enumerate}
	
\begin{figure}[h]
	\centering
	\includegraphics[width=0.8\linewidth]{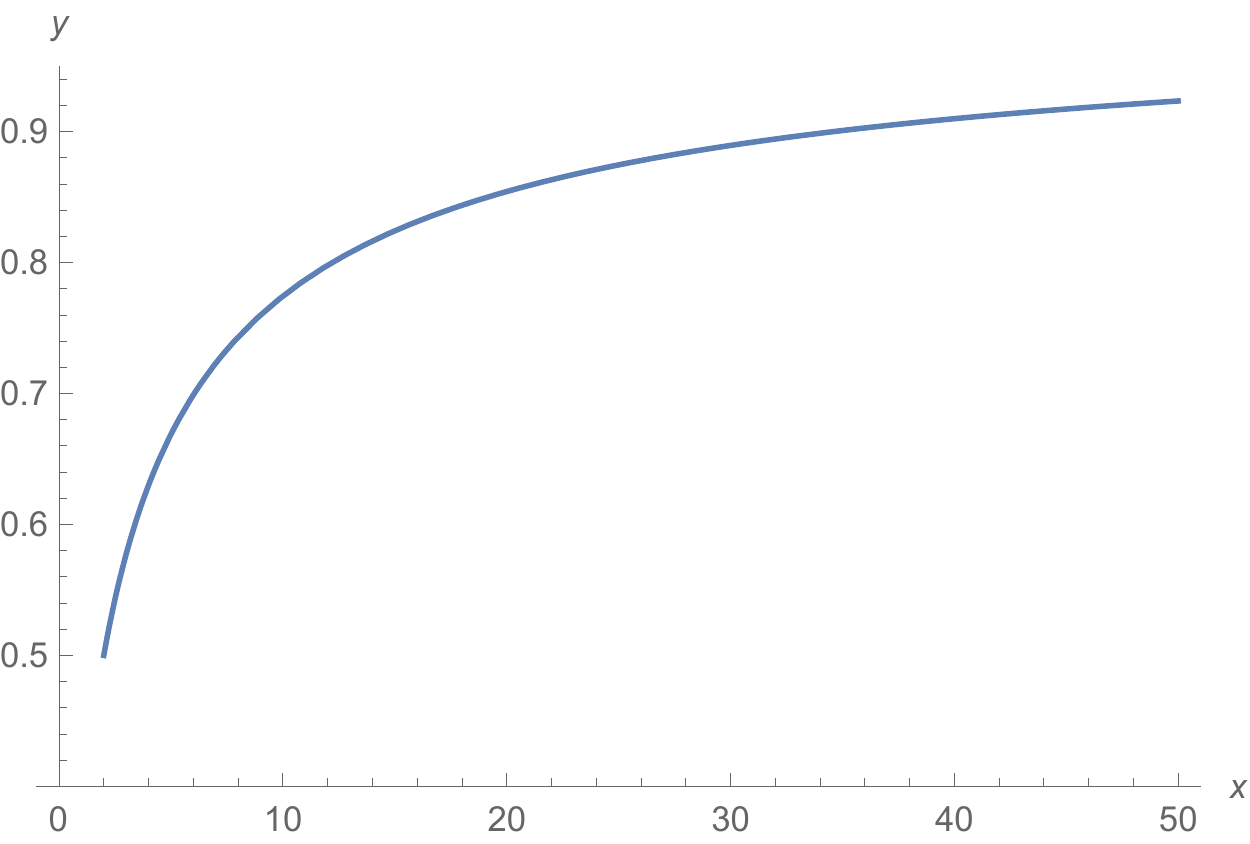}
	\caption{The curve of $\frac{E_{FPDM}}{E_{opt}}$ when the number of buyers is infinity.}
	\label{fig:222222}
\end{figure}
	
	\item The first case is one where the number of the seller's neighbours is a constant and where the number of buyers in the social network is infinity. For convenience, we let $x=5$ and fix $\alpha= 0.1$.
	
\begin{figure}
	\centering
	\includegraphics[width=0.8\linewidth]{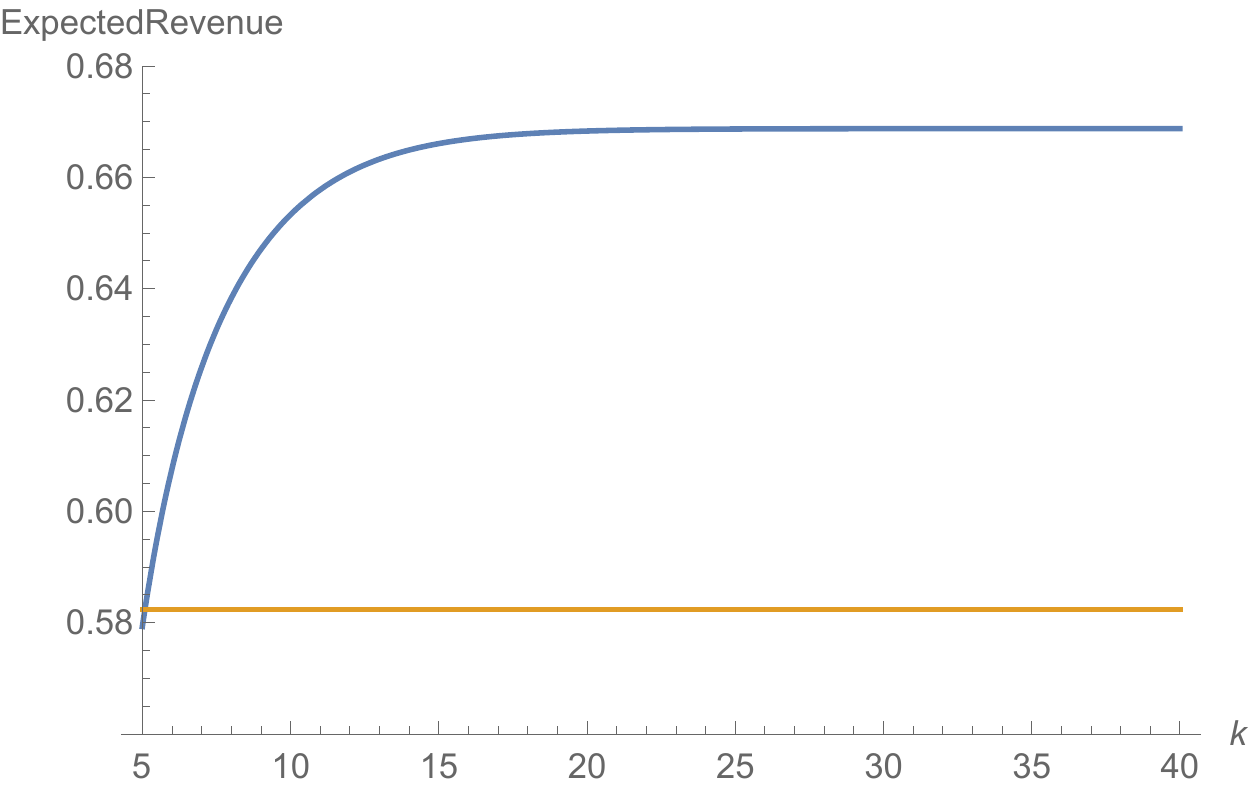}
	\caption{The curves of $E_{FPDM}$ and $E_{base}$ when $x=5$. The upper is the one of $E_{FPDM}$ and the lower the one of $E_{base}$.
	}
	\label{fig:333333}
\end{figure}
	
From the Figure~\ref{fig:333333}, we see that when $k \geq 5$, the seller's expected revenue generated by our mechanism is greater than that of the fixed-price mechanism without diffusion, i.e.,  $E_{FPDM}>E_{base}$.
	
	\item The second case is one where  $x=5$  and  $k \rightarrow +\infty$, we compute the following limitation
	\begin{equation*}
	\begin{split}
	&\lim\limits_{k\rightarrow +\infty}\frac{E_{FPDM}}{E_{opt}}\\
	=&\lim\limits_{k\rightarrow +\infty}\frac{	[1-(\frac{1}{x})^\frac{k-x+1}{x-1}]\frac{1}{x}^\frac{1}{x-1}+\frac{1}{x}^\frac{k-x+1}{x-1}[1-(\frac{1}{k})^\frac{x-1}{k-1}](\frac{1}{k})^\frac{1}{k-1}}{(\frac{1}{1+k})^\frac{1}{k}(\frac{k}{1+k})}\\
	=&(\frac{1}{5})^\frac{1}{4}\approx 67\%.
	\end{split}	
	\end{equation*}
	
\end{enumerate}

We show the curve of $Ratio=\frac{E_{FPDM}}{E_{opt}}$ in this case in Figure~\ref{fig:111111}, from which we see that $E_{FPDM}$ is more than $66\%$ of $E_{opt}$. 

From the above two cases, it can be induced that the seller's expected revenue generated by our mechanism is not only greater than that of the fixed-price mechanism without diffusion but also can achieve at least $66\%$ of the extreme optimal case defined in Section 3.

\begin{figure}
	\centering
	\includegraphics[width=0.8\linewidth]{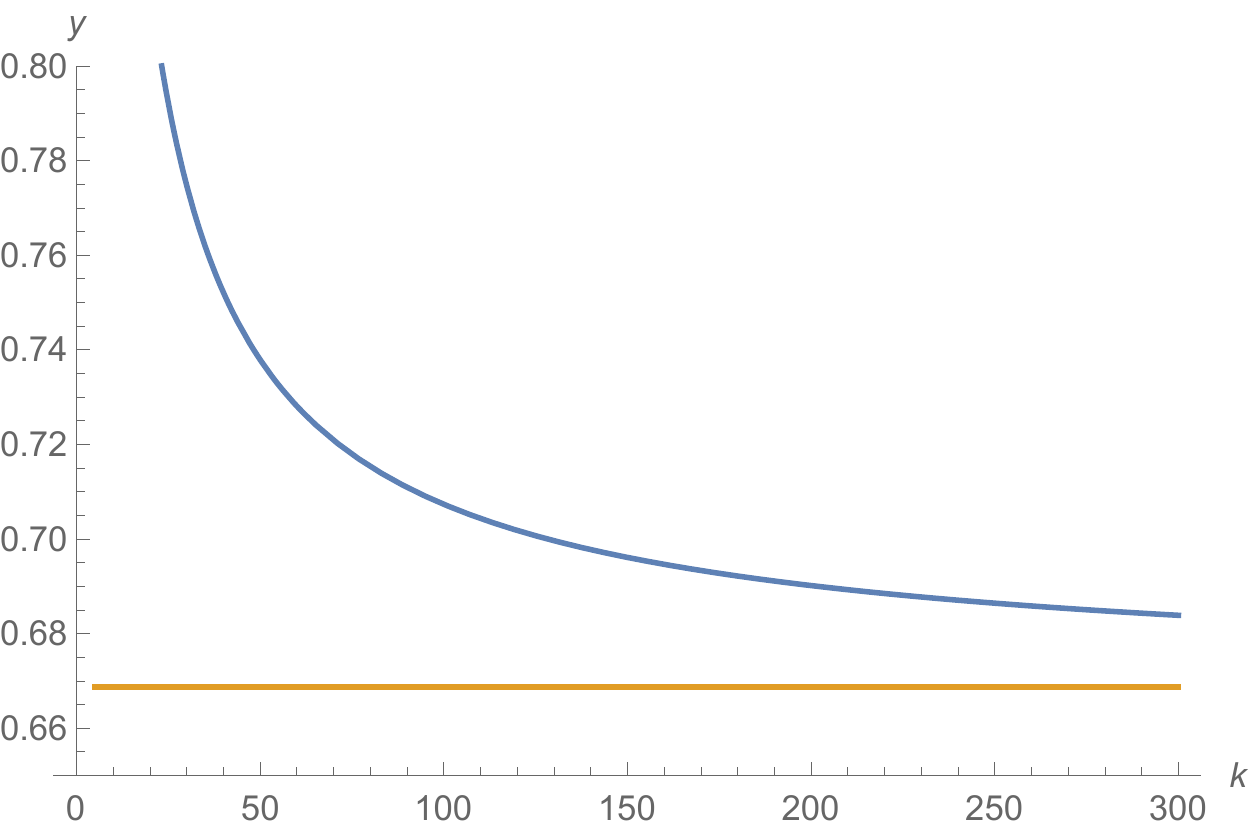}
	\caption{The curve of $\frac{E_{FPDM}}{E_{opt}}$, which approaches the horizontal line $Ratio\approx0.67$.}
	\label{fig:111111}
\end{figure}

Next, we investigate the individual rationality of the proposed mechanism.

\begin{theorem}
	The fixed-price diffusion mechanism is individually rational.
\end{theorem}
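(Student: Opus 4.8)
The plan is to verify $u_l(a',(\pi,p)) = \pi_l(a')\,v_l - p_l(a') \ge 0$ for every buyer $l \in K$ and every feasible profile $a' \in \mathcal{F}((n_1,\ldots,n_k))$, by splitting on the three branches of the payment rule. Write $j$ for the recipient of the item and $w$ for the branch that contains $j$, so that the realized fixed price is $p_w^{opt}(a')$.

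Two of the three cases are immediate. If $l$ receives the item, then $\pi_l(a') = 1$ and $p_l(a') = p_w^{opt}(a')$; but $l$ can only have been made a claimer, and hence a candidate recipient, when $v_l \ge p_w^{opt}(a')$, so $u_l = v_l - p_w^{opt}(a') \ge 0$. If $l$ gets nothing and is not on the path, then $p_l(a') = 0$ and $u_l = 0$. The entire content therefore sits in the middle case, $l \in PH_j$ with $\pi_l(a') = 0$, where $u_l = -\,(p_{base} - p_w^{opt}(a'))\,\alpha\,(\tfrac12)^{d_l}$. Since $\alpha \ge 0$ and $(\tfrac12)^{d_l} > 0$, the sign of $u_l$ is governed entirely by $p_w^{opt}(a') - p_{base}$, so it suffices to prove $p_w^{opt}(a') \ge p_{base}$.

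I would discharge this inequality using Theorem 1: since $p_{base} = (\frac{1}{1+x})^{1/x}$ and $p_w^{opt}(a') = (\frac{1}{1+k_{-w}})^{1/k_{-w}}$, and $n \mapsto (\frac{1}{1+n})^{1/n}$ is increasing, the claim is equivalent to the combinatorial bound $k_{-w} \ge x$, that is, at least $x$ buyers lie outside the winning branch. This reduction is the step I expect to be the main obstacle. A first bound is cheap, since the roots of the other $x-1$ branches already give $k_{-w} \ge x-1$, and the ordering $k_1 \ge \cdots \ge k_x$ yields $p_1^{opt}(a') \le \cdots \le p_x^{opt}(a')$, so it is enough to control the single largest branch and show $k_{-1} \ge x$. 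But closing the final unit of count to $x$ is exactly where the delicate configurations live: when the non-winning branches are all singletons, $k_{-w} = x-1$ and the target inequality becomes tight, so these boundary profiles must be examined by hand. I would try to recover the missing unit from the hypothesis that $PH_j$ is nonempty, which forces $d_j \ge 2$ and hence a genuinely deep winning branch; making this structural observation yield $k_{-w} \ge x$ in every case (or otherwise isolating and treating the small-branch boundary cases directly) is where the proof must ultimately be made rigorous.
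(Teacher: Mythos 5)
Your case split and your reduction of the path-buyer case to the inequality $p_w^{opt}(a') \geq p_{base}$ mirror the paper's own proof, and your translation of that inequality into the combinatorial bound $k_{-w} \geq x$ (via the monotonicity of $n \mapsto (\frac{1}{1+n})^{1/n}$ established in Theorem 1) is correct. But the repair you hope for does not exist: the hypothesis that $PH_j$ is nonempty, i.e.\ $d_j \geq 2$, only tells you that the \emph{winning} branch contains at least two buyers, whereas the missing unit in $k_{-w} = \sum_{i \neq w} k_i \geq x-1$ must come from the \emph{other} branches. Concretely, take $x=2$, let branch $1$ be a path of buyers at depths $1,2,\ldots$ and branch $2$ a single buyer; then $k_{-1} = 1$, so $p_1^{opt}(a') = \frac{1}{2} < (\frac{1}{3})^{1/2} = p_{base}$. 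If the depth-$1$ buyer's valuation lies below $\frac{1}{2}$ while the depth-$2$ buyer's lies above it, the item goes to the depth-$2$ buyer, and the depth-$1$ buyer, who is on the path, pays $(p_{base} - p_1^{opt}(a'))\,\alpha\,(\tfrac{1}{2})^{1} > 0$ for any $\alpha > 0$ while receiving nothing, so her utility is strictly negative. Thus $k_{-w} \geq x$ genuinely fails whenever all non-winning branches are singletons and the winner is deep, and no argument can close your gap: individual rationality itself is violated on such profiles (unless $\alpha = 0$).

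You should also know that the paper does no better at this exact step: its proof disposes of it with the sentence ``By the Theorem 2, we know that $p_j^{opt}(a') > p_{base}$,'' but Theorem 2 is a statement about expected revenue and nowhere establishes this pointwise price comparison; indeed, the network used in the second half of Theorem 2's own proof (one large branch, all other branches singletons) is precisely the configuration in which the comparison is false. So your proposal is incomplete exactly where the paper's proof is unsound, and your diagnosis of the boundary case is the honest account: the theorem holds as stated only under an extra hypothesis such as $k_{-w} \geq x$ for every possible winning branch (equivalently, at least one non-winning branch containing two or more buyers), or trivially when $\alpha = 0$.
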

\begin{proof}
	Given the buyers' action profile $a' \in \mathcal{F}((n_1,\cdots,n_k))$, we only need to show that for all $i \in K$, it holds that 
	\begin{equation*}
	u_i(a',(\pi^{FPDM},p^{FPDM})) \geq 0
	\end{equation*}
	for all $a_i'=n_i'$.
	
	From the definition of FPDM, for any buyer $i \in K$, without loss of generality, we assume that $i \in X_j(j=1,\cdots,x)$. Then we have
	\begin{align*}
	&u_i(a',(\pi^{FPDM},p^{FPDM}))=0 \notag \\
	&u_i(a',(\pi^{FPDM},p^{FPDM}))=v_i-p_j^{opt}(a') \notag \\
	&u_i(a',(\pi^{FPDM},p^{FPDM}))=(p_j^{opt}(a')-p_{base})\alpha (\frac{1}{2})^{d_i}
	\end{align*}
	
	According to the allocation policy of the mechanism, we have $v_i > p_j^{opt}(a')$, so
	\begin{equation*}
	u_i(a',(\pi^{FPDM},p^{FPDM}))=v_i-p_j^{opt}(a') >0.
	\end{equation*}
	
	By the Theorem 2, we know that $p_j^{opt}(a') > p_{base}$. Then following the definition of $\alpha$ and $d_i$, we get
	\begin{equation*}
	u_i(a',(\pi^{FPDM},p^{FPDM}))=(p_j^{opt}(a')-p_{base})\alpha (\frac{1}{2})^{d_i} >0.
	\end{equation*}
	
	Therefore, we have $u_i(a',(\pi^{FPDM},p^{FPDM})) \geq 0$.
	\qed
\end{proof}

Now we discuss the incentive compatibility of the fixed-price diffusion mechanism. Based on the model defined in Section 2, we only need to analyze the diffusion action of each buyer. We obtain the following theorem by proving that for all buyers, diffusing the information to all their neighbours (i.e., $n_i'=n_i$) maximizes their utilities, which means the mechanism incentivizes them to do so.
\begin{theorem}
	The fixed-price diffusion mechanism is incentive compatible.
\end{theorem}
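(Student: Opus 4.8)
The plan is to establish incentive compatibility \emph{pointwise}. I fix an arbitrary buyer $i$, fix the actions $a_{-i}'$ of everyone else, and fix a realization of all valuations (and of the random tie-breaking), and I show that full diffusion $a_i'=n_i$ yields $i$ weakly higher utility than any feasible deviation $a_i''=n_i''$ with $n_i''\subseteq n_i$. Since the inequality $u_i(a',(\pi^{FPDM},p^{FPDM}))\geq u_i(a'',(\pi^{FPDM},p^{FPDM}))$ then holds for every realization, it holds in expectation, which is exactly the condition in Definition of IC. Throughout I write $j$ for the branch of $i$, i.e. $i\in X_j$, and I use that the informed set inside $i$'s subtree under the deviation $a_i''$ is contained in the one under $a_i'=n_i$, because informing fewer children can only leave fewer descendants reachable.

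The central observation is an invariance property. Buyer $i$'s action only changes which of her \emph{proper descendants} become informed, and all of these lie in branch $j$; hence her action alters $k_j$ but never $k_{-j}$, so by construction the price $p_j^{opt}(a')=(\frac{1}{1+k_{-j}})^{\frac{1}{k_{-j}}}$ is invariant under $i$'s action, which is precisely the remark made right after the mechanism is defined. I then note that $p_j^{opt}(a')$ is the \emph{only} price that can enter $i$'s utility: if $i$ wins she pays $p_j^{opt}(a')$, and if $i$ is rewarded it is because the winner lies on the seller-to-winner path strictly below $i$, hence is a descendant of $i$ and so also belongs to $X_j$, forcing the winning branch to be $w=j$. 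As $d_i$, $p_{base}$ and $\alpha$ are constants, $i$'s realized utility is necessarily one of three \emph{action-independent} values: $v_i-p_j^{opt}(a')>0$ when $i$ wins (call it O1), $(p_j^{opt}(a')-p_{base})\alpha(\tfrac12)^{d_i}>0$ when a descendant of $i$ wins (O2), and $0$ otherwise (O3); positivity of the first two is exactly what the proof of Theorem 4 already gives, using $p_j^{opt}(a')>p_{base}$ from Theorem 2.

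It then remains to show that full diffusion weakly maximizes the occurrence of a positive outcome, and here the depth-based allocation rule does the work: every proper descendant of $i$ has depth strictly larger than $d_i$ and is therefore ranked strictly below $i$ inside branch $j$. I split on whether $i$ is a claimer, i.e. whether $v_i\geq p_j^{opt}(a')$. If $i$ is a claimer, she can only land in O1 or O3, and no descendant she informs can ever outrank her, so the set of claimers ranked above $i$ is untouched by her action; the only way a deviation can harm her is by preventing branch $j$ from being selected. If $i$ is not a claimer, she can only land in O2 or O3, and O2 requires that the shallowest claimer of branch $j$ fall inside $i$'s subtree — informing fewer descendants can only delete such claimers, so O2 under the deviation implies O2 under full diffusion, again provided branch $j$ is selected at least as often.

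The remaining proviso — that branch $j$ is selected under full diffusion whenever it is under the deviation — is where the real work lies, and I expect the \textbf{cross-branch interaction} to be the main obstacle. Enlarging $k_j$ (which full diffusion does relative to any deviation) helps branch $j$ get selected for two independent reasons. First, since branches are processed in non-increasing size order, a larger $k_j$ moves branch $j$ weakly earlier, so the set of branches that could preempt it can only shrink. Second, a larger $k_j$ raises every $k_{-i'}$ for $i'\neq j$, and by the monotonicity proved in Theorem 1 this raises every rival price $p_{i'}^{opt}(a')$, so any rival branch that had no claimer under the deviation still has none under full diffusion (its buyer set is unchanged and its threshold only went up). Combining these, if every branch ahead of $j$ is claimer-free under the deviation then so is the (smaller) set of branches ahead of $j$ under full diffusion, while $i$'s own claimer (or descendant claimer) persists; hence branch $j$ is selected under full diffusion, the within-branch outcome weakly favours $i$, and her realized utility is weakly higher. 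Taking expectations over valuations and tie-breaks yields the IC inequality and completes the proof.
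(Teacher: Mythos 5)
Your proof is correct, and while it shares the paper's key ingredients, it follows a genuinely more careful route that actually closes gaps in the paper's own argument. The paper classifies buyers by their role in the realized outcome (the winner, buyers on the seller-to-winner path, and everyone else --- exactly your outcomes O1, O2, O3) and, for the first two categories, invokes the same invariance you identify, namely that $p_j^{opt}(a')$ depends only on $k_{-j}$ and hence on no action taken inside branch $j$. But the paper treats a buyer's category as fixed under deviation: for the winner it simply asserts that after a misreport ``her utility is also $v_w-p_j^{opt}(a')$,'' and for the third category it argues only qualitatively that diffusing ``creates the chance'' of a positive outcome, never addressing whether a deviation could change which branch is selected in the first place. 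Your proof supplies precisely this missing step: by fixing a realization of valuations and tie-breaks and proving outcome dominance (deviation yields O1 only if full diffusion yields O1, and likewise for O2, with the utility values themselves action-independent), you reduce everything to showing that branch $j$ is selected under full diffusion whenever it is under the deviation, and you establish that with two monotonicity facts the paper never states --- enlarging $k_j$ moves branch $j$ weakly earlier in the processing order, and, via the monotonicity from Theorem 1, raises every rival price $p_{i'}^{opt}(a')$, so claimer-free rival branches stay claimer-free. What the paper's category-based route buys is brevity; what yours buys is a complete argument for the step the paper hand-waves, plus a pointwise-dominance framing under which the IC inequality follows immediately in expectation. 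The only informality you share with the paper is the treatment of ties (equal branch sizes, equal depths and neighbour counts), which neither argument pins down precisely, though in both cases full diffusion can only improve $i$'s standing.
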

\begin{proof}
	Given the buyers' action profile $a' \in \mathcal{F}((n_1,\cdots,n_k))$, to prove that FPDM is incentive compatible, we need to show that for each buyer $i \in K$ such that $a_i' \neq nil$, diffusing the sale information to all $i$'s neighbours $n_i$ maximises $i$'s utility.
	We first classify all the buyers into three categories, then for buyers in each category prove the theorem is true.
	\begin{enumerate}
		\item The buyer who receives the item, denoted by $w$. 
		\item All buyers on the path from the seller to $w$.
		\item All buyers who are not included in the above two categories. 
	\end{enumerate} 
	
	\textbf{Category} (1): Without loss of generality, we assume $w \in X_j$. Then when reporting type truthfully, $w$'s utility is $u_w(a',(\pi^{FPDM},p^{FPDM}))=v_w-p_j^{opt}(a')$, which is independent of her diffusion action. Now when $w$ misreports her type, then her utility is also $v_w-p_j^{opt}(a')$. Therefore, the utility is not better than reporting type truthfully. Therefore, reporting type truthfully maximizes $w$'s utility.
	
	\textbf{Category} (2): According to the notation introduced in Section 4, for each buyer $i$ in this category, we have  $i \in PH_{w}$, where $w$ is the buyer who recieves the item. Without loss of generality, we assume $w \in X_j$. When reporting type truthfully, the buyer $i$'s utility is
	$u_i(a', (\pi^{FPDM},p^{FPDM}))=(p_j^{opt}(a')-p_{base})\alpha(\frac{1}{2})^{d_i}$. Since that $p_j^{opt}(a')=(\frac{1}{1+k_{-j}})^{\frac{1}{k_{-j}}}$, $k_{-j}$ is independent of $i$'s action, and $p_{base}$ and $d_i$ are irrevalent to $i$'s action, we know that $i$'s utility is independent of her action. When misreporting her type, then her utility does not change. Therefore it is not worth misreporting type.
	
	\textbf{Category} (3): For each buyer $i$ in this category, her utility is 0. If $i$ belongs to the branch which the winner belongs to, diffusing the sale information to all her neighbours makes $i$ or her descendants have the chance to become the winner, which increases $i$'s utility. Otherwise, $i$'s diffusing the sale information to all her neighbours increases the number of buyers in the branch which $i$ belongs to, which creates the chance that her branch  wins the item, and then $i$ will have the chance to become the winner or to stand in the path from the seller to the winner, which increases $i$'s utility. Therefore, diffusing the sale information to all her neighbours maximizes $i$'s utility.
	
	Putting the above together, we know that the fixed-price diffusion mechanism is incentive compatible.
	\qed
\end{proof}

\section{Conclusions}
\label{con}
This paper designed a mechanism called fixed-price diffusion mechanism (FPDM) for a seller to sell an item via a tree-structure social network. The mechanism helps the seller to attract more buyers by incentivizing buyers to invite each other on the network. Under this mechanism, the seller's revenue improvement is guaranteed, which is at least 1/2 of the optimal revenue the seller can get with the fixed-price mechanism. More importantly, the mechanism advances other existing solutions in the sense that it is fix-priced and does not require buyers' valuations to make the decision. 

We have only considered tree-structure social networks in this paper. It would be very interesting to look at the problem in a general network structure. We have already had an idea to tackle this problem. Roughly speaking, we transform the graph structure social network into one of tree structure by randomly deleting some edges, which we will prove to maintain the properties of our mechanism. Another interesting topic would be to consider hiding the social network structure from the seller as well. Lastly, there might exist other fixed-price mechanisms to further improve the seller's revenue.

\bibliographystyle{splncs04}
\bibliography{samplepaper}

\begin{thebibliography}{10}
\providecommand{\url}[1]{\texttt{#1}}
\providecommand{\urlprefix}{URL }
\providecommand{\doi}[1]{https://doi.org/#1}

\bibitem{badanidiyuru2012learning}
Badanidiyuru, A., Kleinberg, R., Singer, Y.: Learning on a budget: posted price
  mechanisms for online procurement. In: Proceedings of the 13th ACM Conference
  on Electronic Commerce. pp. 128--145. ACM (2012)

\bibitem{Domingos:2001:MNV:502512.502525}
Domingos, P., Richardson, M.: Mining the network value of customers. In:
  Proceedings of the Seventh ACM SIGKDD International Conference on Knowledge
  Discovery and Data Mining. pp. 57--66. KDD '01, ACM, New York, NY, USA
  (2001). \doi{10.1145/502512.502525},
  \url{http://doi.acm.org/10.1145/502512.502525}

\bibitem{emek2011mechanisms}
Emek, Y., Karidi, R., Tennenholtz, M., Zohar, A.: Mechanisms for multi-level
  marketing. In: Proceedings of the 12th ACM conference on Electronic commerce.
  pp. 209--218. ACM (2011)

\bibitem{goldberg2006competitive}
Goldberg, A.V., Hartline, J.D., Karlin, A.R., Saks, M., Wright, A.: Competitive
  auctions. Games and Economic Behavior  \textbf{55}(2),  242--269 (2006)

\bibitem{Jackson2005TheEO}
Jackson, M.O.: The economics of social networks (2005)

\bibitem{article}
Li, B., Hao, D., Zhao, D., Zhou, T.: Mechanism design in social networks.
  Proceedings of The Thirty-First AAAI Conference on Artificial Intelligence
  pp. 586--592 (06 2017)

\bibitem{mathews2004impact}
Mathews, T.: The impact of discounting on an auction with a buyout option: A
  theoretical analysis motivated by ebay's buy-it-now feature. Journal of
  Economics  \textbf{81}(1),  25--52 (2004)

\bibitem{mcafee2008gains}
McAfee, R.P.: The gains from trade under fixed price mechanisms. Applied
  Economics Research Bulletin  \textbf{1}(1),  1--10 (2008)

\bibitem{myerson1981optimal}
Myerson, R.B.: Optimal auction design. Mathematics of operations research
  \textbf{6}(1),  58--73 (1981)

\bibitem{Nisan2009Algorithmic}
Nisan, E.B.N., Roughgarden, T., Tardos, E., Vazirani, V.V.: Algorithmic game
  theory. Communications of the Acm  \textbf{53}(7),  78--86 (2009)

\bibitem{pickard2011time}
Pickard, G., Pan, W., Rahwan, I., Cebrian, M., Crane, R., Madan, A., Pentland,
  A.: Time-critical social mobilization. Science  \textbf{334}(6055),  509--512
  (2011)

\bibitem{tran2017group}
Tran, T., Desiraju, R.: Group-buying and channel coordination under asymmetric
  information. European Journal of Operational Research  \textbf{256}(1),
  68--75 (2017)

\bibitem{vickrey1961counterspeculation}
Vickrey, W.: Counterspeculation, auctions, and competitive sealed tenders. The
  Journal of finance  \textbf{16}(1),  8--37 (1961)

\bibitem{wang2008auction}
Wang, X., Montgomery, A., Srinivasan, K.: When auction meets fixed price: A
  theoretical and empirical examination of buy-it-now auctions. Quantitative
  Marketing and Economics  \textbf{6}(4), ~339 (2008)

\bibitem{zhao2018selling}
Zhao, D., Li, B., Xu, J., Hao, D., Jennings, N.R.: Selling multiple items via
  social networks. In: Proceedings of the 17th International Conference on
  Autonomous Agents and MultiAgent Systems. pp. 68--76. International
  Foundation for Autonomous Agents and Multiagent Systems (2018)

\end{thebibliography}
\end{document}